\Crefname{algocf}{Algorithm}{Algorithms}
\pgfplotsset{compat=1.5}
\newtheorem{proposition}{Proposition}
\numberwithin{equation}{section}
\theoremstyle{plain}
\begin{document}

    \begin{frontmatter}
        \title{Efficient Construction of Test Inversion Confidence Intervals Using Quantile Regression, with Application to Population Genetics}
        \runtitle{Efficient Construction of Confidence Intervals}

        \begin{aug}
            \author{\fnms{Eyal} \snm{Fisher}\ead[label=e1]{eyalfisher@mail.tau.ac.il}},

            \author{\fnms{Regev} \snm{Schweiger}\ead[label=e2]{regevs@gmail.com}}
            \and
            \author{\fnms{Saharon} \snm{Rosset}
                \ead[label=e3]{saharon@post.tau.ac.il}}

            \affiliation{Tel Aviv University}

%
        \end{aug}

        \begin{abstract}
            ~Modern problems in statistics tend to include estimators of high computational complexity and with complicated distributions. Statistical inference on such estimators usually relies on asymptotic normality assumptions, however, such assumptions are often not applicable for available sample sizes, due to dependencies in the data and other causes. A common alternative is the use of re-sampling procedures, such as the bootstrap, but these may be computationally intensive to an extent that renders them impractical for modern problems. In this paper we develop a method for fast construction of test-inversion bootstrap confidence intervals. Our approach uses quantile regression to model the quantile of an estimator conditional on the true value of the parameter, and we apply it on the Watterson estimator of mutation rate in a standard coalescent model. We demonstrate an improved efficiency of up to 40\% from using quantile regression compared to state of the art methods based on stochastic approximation, as measured by the number of simulations required to achieve comparable accuracy.
        \end{abstract}

    \end{frontmatter}

    \section{Introduction}

    When applying statistical modeling methodologies to modern problems in scientific domains, characteristics often encountered include high computational complexity of the estimation process, and statistical complexity of the resulting estimators. It is therefore important to develop statistical inference approaches that can be applied to cases where distributions of estimators are hard to derive analytically, and resampling-based approaches are computationally prohibitive, and have to be efficiently implemented.

    This scenario is common in the area of population genetics, in which probabilistic models are constructed and fitted in order to examine certain characteristics of a population, such as the population mutation rate and recombination rate, which are of high scientific interest~\citep{Myers2013,Scally2012,Sigurgardottir2000} and are commonly estimated using coalescence models~\citep{Hey1997,Narasimhan2016,Palamara2015}.

    The resulting estimators tend to have a complicated distribution, and simulations can involve simulating the evolution of millions of DNA bases, in thousands of individuals for many generations, and therefore are computationally costly.

    We concentrate on the challenge of deriving confidence intervals for the parameter estimates in such problems. When asymptotic parametric approaches do not exist or are unreliable for given sample sizes, it is common to use resampling-based approaches, usually based on the bootstrap and its variants, and methods based on non-parametric bootstrap have been widely used and studied~\citep{Felsenstein1985,Efron2016}. However, they often encounter difficulties when the statistical setting is too complex to allow the sampling schemes required for implementing non-parametric bootstrap~\citep{Efron1996,Efron2003}.

    An important approach to constructing confidence intervals is the method known as ``test inversion'', which uses the duality between hypothesis testing and confidence intervals, defining the $1-\alpha$ confidence interval as the set of parameter values for which the test of $H_0: \theta=\theta_0$ is not rejected at level $\alpha$. When the relevant rejection regions cannot be analytically defined, test inversion is implemented through simulation, essentially using a parametric-bootstrap approach. The search for the confidence interval endpoint is often estimated through the efficient Robbins-Monro (RM) procedure~\citep{Carpenter1999,PaulH.GarthwaiteandStephenT.Buckland2016}, however, the RM algorithm is inaccurate for extreme quantile and does not achieve optimal convergence rates~\citep{Wetherill2016,Young2016}.

    In this paper we describe three methodologies for finding the correct endpoint efficiently. The commonly used approach of \cite{PaulH.GarthwaiteandStephenT.Buckland2016} (RM), a new adaption of the method that is based on an RM algorithm for binary data developed by \cite{Joseph2004}, to which we refer as Binary Robbins-Monro (BRM), and a novel method, that uses an adaptive quantile regression (AQR) and inverts the estimated quantiles to determine the endpoint. We compare the approaches analytically and empirically and show that the latter is more efficient when only a small number of simulations can be sampled from the model due to computational limitations.

    Finally, we apply and compare the suggested methodologies for the purpose of constructing confidence intervals for the mutation rate parameter under a standard coalescent model, using the Watterson estimator for the population mutation rate, for which the convergence to the normal approximation is known to be slow and therefore inaccurate for moderate sample size~\citep{Klein1999}. Estimates of the human mutation rate are frequently used in order to date events in our population history~\citep{Scally2012,Schiffels2014}. Using a different estimate of the mutation rate can substantially influence such analysis~\citep{Sigurgardottir2000}, thus, it is important to accurately measure the uncertainty of such estimates.

    \section{Why Test Inversion?}

    This section is devoted for a brief discussion about the different procedures for the construction of bootstrap confidence intervals and their drawbacks. For a more complete discussion see \cite{Carpenter2000}.

    The most common family of bootstrap methods is that of the pivotal methods, which includes the basic bootstrap and the bootstrap-t.
    These methods are very similar to the classical methods for construction of confidence intervals, but when using them we replace $w_\alpha$, the $\alpha$ percentile of the unknown reference distribution with $w^{\ast}_\alpha$, the $\alpha$ percentile of the bootstrap distribution. We can then use studentization to reduce the coverage error.

    Un-studentized pivotal intervals tend to be inaccurate. The studentized confidence intervals may contain invalid values, and rely on knowledge of the variance of the estimator or a computationally heavy second-level bootstrap for the estimation of the variance.

    The second family is the percentile family. Here we aim to take the $1-\alpha$ empirical percentile of the bootstrap distribution to be the upper end of our interval. The method is simple, cannot contain invalid values and is transformation respecting.

    However, its justification depends on the existence of a function $g(\cdot)$ such that
    $g(\hat{\theta}^{\ast}) - g(\hat{\theta}) \sim g(\hat{\theta}) - g(\theta) \sim N(0, \sigma^2)$ .
    In many applications such function does not exist and we get a substantial coverage error. Improvements to the methods, such as the BCa method, solve part of the problem but add a lot of complexity, and still rely on the existence of a function that cannot easily be shown to exist.

    The third family, the test inversion family, offers a simple alternative to the pivotal and percentile methods which can be implemented whenever parametric bootstrap sampling is possible. It does not suffer from the issues mentioned above, does not rely on any implicit assumptions, and in the absence of nuisance parameters has no bootstrap error, which can be significant for the other methods if proper re-sampling is complex due to dependencies in the data. It is, however, computationally demanding, which make it especially important to develop methods that allow for faster calculation of bootstrap test inversion confidence intervals.

    \section{Methodology}

    We start by describing the test inversion methodology. Let $X$ be a random variable with some density $f_\theta$ that depends on an unknown parameter of interest $\theta$, and let $\hat{\Theta}(X)$ be an estimator of $\theta$.
    If $\hat{\Theta}(X)$ is stochastically increasing with $\theta$ and $\hat{\theta} = \hat{\Theta}(x)$ is the estimate of $\theta$ based on a sample $x$, then the correct endpoint $U$ of a one sided (1-$\alpha$) confidence interval for $\theta$, satisfies:
    $$P_{\theta=U}(\hat{\theta} < \hat{\Theta}(X)) = 1-\alpha .$$
    In this setting, $U$ is the smallest value for which we would have rejected the hypothesis $H_0: \theta=U$ in favor of $H_1: \theta<U$ in an $\alpha$ level test.

    Our problem is now focused on finding the point $U$ with this property.
    Since the distribution of the estimator is unknown, $U$ cannot be inferred analytically and has to be estimated by a Monte-Carlo simulation.
    The straight forward way to carry such a simulation is to interpolate sample quantile of $\hat{\Theta}(X)$, as in \cite{Schweiger2015}. This approach is briefly described in \Cref{algo_naive}, resulting in $\hat{U}$ - an estimate of $U$.
    However, this approach is computational inefficient in a way that makes it impractical for many problems, as it entails simulating many samples from the distribution in areas that are not the interest of the analysis.

    \IncMargin{1em}
    \begin{algorithm}
        Choose a grid of points $\underline{\theta} = (\theta_1, \theta_2, ..., \theta_n)$

        \For{$i\leftarrow 1$ \KwTo $n$}{
            Set $\theta \leftarrow \theta_i$

            \For{$j\leftarrow 1$ \KwTo $B$}{
                Sample $x_i^j$ from $f_{\theta}$

                $\hat{\theta}_i^j \leftarrow \hat{\Theta}(x_i^j)$
            }

            Calculate $q_i^\alpha$, the $\alpha$ sample quantile of $\hat{\theta}_i^1, ..., \hat{\theta}_i^B$
        }

        Find $k$ such that $q_k^\alpha < \hat{\theta} < q_{k+1}^\alpha$

        $\hat{U} \leftarrow \frac{(\hat{\theta} - q_k^\alpha)\cdot(\theta_{k+1} - \theta_k)}{q_{k+1}^\alpha - q_k^\alpha} + \theta_k$

        \caption{Finding the upper endpoint of a $100(1-\alpha)\%$ confidence interval by interpolating sample quantiles}
        \label{algo_naive}
    \end{algorithm}

    It was shown by \cite{Carpenter2000} that a more efficient approach is to use stochastic approximation to solve $M(\theta) = 1-\alpha$ for $\theta$, where $M(\theta) = P_\theta(\hat{\theta} < \hat{\Theta}(X))$. The value of $M(\theta)$ is unknown, but simulation can be carried for a given $\theta$ to get a noisy observation from the function.

    \subsection{Robbins-Monro algorithm}

    Given a function $M(\theta)$ such as the one described above, it is shown by \cite{Robbins1951} that the sequence
    $$
    x_{n+1} = x_n - a_n(y_n-b_n) ,
    $$
    where $y_n$ is the nth noisy observation, converges to the solution of $M(\theta) = 1-\alpha$,
    for every sequence $a_n$ for which:
    $$
    b_n = \alpha, \, \sum_{n=1}^{\infty}a_n = \infty, \,\sum_{n=1}^{\infty}a_n^2 < \infty .
    $$

    In our case
    $$
    y_n =
    \begin{dcases}
    1 ,& \text{if } \hat{\theta}_i > \hat{\theta},\\
    0,              & \text{otherwise}.
    \end{dcases}
    $$
    \cite{PaulH.GarthwaiteandStephenT.Buckland2016} described the use of the Robbins-Monro process for finding the endpoint of the confidence interval.
    We let $U_i$ be the current estimate of the endpoint and $\hat{\theta}_i$ be the current estimate of $\theta$, based on a random sample taken with $\theta=U_i$. In each step we update $U_i$ in the following manner:
    $$
    U_{i+1} =
    \begin{dcases}
    U_i - \frac{c\alpha}{i},& \text{if } \hat{\theta}_i > \hat{\theta},\\
    U_i + \frac{c(1-\alpha)}{i},              & \text{otherwise}.
    \end{dcases}
    $$
    Where $c$ is a step size constant. The procedure is shown to be fully asymptotically efficient (the variance of $U_i$ meets Cramer-Rao lower bound for a-parametric estimators) if $c$ is set to be 1/$M^{'}(U)$, the inverse of the slope of $M$ at the endpoint. However, neither $M$ or $U$ are known, so $c$ is estimated adaptively, using $U_i$ in place of $U$ and setting it to twice the optimal value for the normal distribution, as using a bigger than optimal step size is less damaging to convergence rate than a too small constant.

    The next section describes an adaptation of this process that makes a different choice of $a_n$ and $b_n$ in order to obtain better results for extreme quantiles.

    \subsubsection{Binary Robbins-Monro}

    As mentioned above, if the optimal step size constant is known, the Robbins-Monro procedure is fully asymptotically efficient. However, it was empirically shown to work poorly for extreme quantiles. In order to improve the convergence of the process, \cite{Joseph2004} suggested a modified algorithm that takes advantage of the fact that in order to search for the quantile we use the binary response:
    $$
    Y_{i} =
    \begin{dcases}
    1,& \text{if } \hat{\theta}_i > \hat{\theta},\\
    0,              & \text{otherwise}.
    \end{dcases} ,
    $$
    Next, he defines $M(x)$ to be the probability that $Y_{i} = 1$ conditioned on $\theta=x$. The goal is again to find $U$ for which $M(U) = 1-\alpha$.

    For convenience, we denote $M(x) = M(x- U)$ so that $M(0) = 1-\alpha$.
    \cite{Joseph2004} continues in a Bayesian framework, assuming a prior distribution for $U$, for which $\mathrm{E}(\Theta) = x_1, \mathrm{Var}(\Theta) = \sigma_1^2$. Denoting $Z_n = x_n - \Theta$, the goal now is to find the sequences $a_n, b_n$ for which $Z_n \rightarrow 0$ at the fastest rate.
    Using a normal approximation for $M(x)$, he showed that the optimal procedure sets
    $$
    x_{n+1} = x_n - \frac{c_n}{\beta b_n (1-b_n)}(y-b_n) ,
    $$
    where
    $$
    c_n = \frac{v_n}{(1+v_n)^\frac{1}{2}}\phi\left(\frac{\Phi^{-1}(\alpha)}{(1+v_n)^\frac{1}{2}}\right),
    b_n = \Phi \left(\frac{\Phi^{-1}(\alpha)}{(1+v_n)^\frac{1}{2}} \right),
    v_{n+1} = v_n - \frac{c_n^2}{b_n(1-b_n)}
    $$
    with $v_1 = \beta^2\tau_1^2$, $\beta = \frac{M'(0)}{\phi(\Phi^{-1}(\alpha))}$. \\

    Interestingly, this procedure results in a sequence, $b_n$, that starts between $\alpha$ and 0.5, and converges to $\alpha$ as $n$ increases, essentially starting the search by searching a less extreme quantile and advancing towards $\alpha$ as more data is gathered.

    Under the optimal assignment of $M'(0)$, this modified procedure is more accurate than regular Robbins-Monro for extreme quantiles and a small sample size. However, neither M nor the correct endpoint are known when aiming to construct confidence intervals for complicated distributions, therefore M has to be replaced by an approximation and $M'(0)$ has to be estimated adaptively from the data. In order to do so, we follow the suggestion by \cite{PaulH.GarthwaiteandStephenT.Buckland2016}, setting it to be proportional to the distance from $\hat{\theta}$:
    $$M'(0) = \frac{1}{k(U_i - \hat{\theta}(y))}, k = \frac{1}{z_\alpha(2\pi)^{-1/2}e^{-z_\alpha^2/2}} .$$

    In addition, the modified algorithm requires the specification of the prior mean and variance.
    In all the simulations that are carried in this paper we choose the prior mean by randomly picking a point from a $N(U, 1)$ distribution, we set the prior variance to be 1, and estimate $M'(0)$ adaptively. We do this identically for the binary and regular Robbins-Monro.

    Both the binary and regular Robbins-Monro procedures still entail some loss of information, as we don't make use of samples that were taken along the process for the determination of $U$ in the later steps and we don't make use of the actual values of the estimates. Finally, the process forces us to search separately for the upper and lower limits, and re-sample from scratch for each of them.  We offer to fix this loss of information by considering the stochastic function $\hat{\Theta}(X)$, which returns an estimate of $\theta$ for a re-sample (from the model with a given $\theta$). We assume a parametric model on the quantiles of this function and use quantile regression to find where the $\alpha$ quantile meets $\hat{\theta}$.

    \subsection{Quantile regression}

    Quantile regression~\citep{Koenker2013} is a type of regression analysis in which we aim to model the conditional quantile of a distribution, instead of the conditional mean.
    This is done by assuming the model $Q_\alpha(Y|X=x) = \beta_0 + \beta_1x$ and minimizing a loss function with respect to $\beta_0, \beta_1$. Unlike the normal linear regression, the loss function for quantile regression is asymmetric and the solution is achieved numerically, as there is no closed-form solution.  Define:
    $$
    \rho_\tau(u) = u(\tau - I(u < 0)) ,
    $$
    The quantile loss is given by:
    $$
    L(\beta_0, \beta_1) = \sum_{i=1}^{n}\rho_\tau(y_i - (\beta_0 + \beta_1x_i)) .
    $$

    The methodology that we suggest revolves around modeling the quantile of the distribution of the bootstrap estimators conditional on the true value of $\theta$ as linear, and solving the resulting equation $\beta_0 + \beta_1
    \hat{U} = \hat{\theta}$ for $\hat{U}$.

    \subsubsection{Asymptotics of the quantile regression procedure}

    For the simplest case, we assume that the linear model of the conditional quantile is true and that the density of the quantiles is fixed. That is, if we denote the density of $\hat{\Theta}(X)$ by $g_\theta(x)$, and we denote the $\tau$ quantile of $g_\theta(x)$ by $\xi_\theta(\tau)$, then $g_\theta(\xi_\theta(\tau)) = c$ for all $\theta$ and some constant $c$.
    Under these conditions, we can calculate the asymptotic variance of our estimator, which has a minimum when $U$ is in the center of mass of the sampled points.

    \begin{proposition}  \label{prop:distribution}
        Let $\xi_\theta(\tau)$ be the $\tau$ quantile of $g_\theta(x)$. Let $g_i := g_i(\xi_i(\tau))$ be the probability density of $\xi_\theta(\tau)$ for $\theta = x_i$. Under general regulatory conditions~\citep{Koenker2013}, if $g_i = c$ for all $x_i$ and some constant $c$, then the optimal sampling for the quantile regression procedure is achieved when $\bar{x} = U$, and $\hat{U}$ has the asymptotic distribution:

        $$
        \sqrt{n}(\hat{U} - U) \xrightarrow{d} N\left(0, \frac{\tau(1-\tau)}{c^2} \cdot \frac{1}{\beta_1^2}\right) .
        $$
    \end{proposition}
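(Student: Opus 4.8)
The plan is to reduce the statement to the classical asymptotic theory for linear quantile regression and then push the limit law through the smooth map $\hat{U} = (\hat\theta - \hat\beta_0)/\hat\beta_1$ by the delta method. Write $w_i = (1, x_i)^\top$ and let $\hat\beta = (\hat\beta_0, \hat\beta_1)^\top$ be the minimiser of $\sum_{i=1}^n \rho_\tau(\hat\theta_i - w_i^\top\beta)$, where $\hat\theta_i = \hat{\Theta}(X_i)$ has conditional density $g_{x_i}$ given the design point $x_i$. Under the regularity conditions of \cite{Koenker2013} --- convergence of the design second moments $\tfrac1n\sum_i w_i w_i^\top \to D_0$ with $D_0$ positive definite, a uniform-boundedness condition $\max_i\|w_i\| = o(\sqrt n)$, and positivity and continuity of $g_\theta$ in a neighbourhood of $\xi_\theta(\tau)$ --- one has
\[
\sqrt n\,(\hat\beta - \beta) \xrightarrow{d} N\!\big(0,\ \tau(1-\tau)\, D_1^{-1} D_0 D_1^{-1}\big),
\]
where $D_1 = \lim \tfrac1n\sum_i g_i\, w_i w_i^\top$ and $\beta = (\beta_0,\beta_1)^\top$ is the intercept/slope of the (assumed) linear conditional-quantile model, so that $\xi_\theta(\tau) = \beta_0 + \beta_1\theta$.

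First I would invoke the hypothesis $g_i \equiv c$: then $D_1 = c\,D_0$ and the sandwich collapses, giving $\sqrt n(\hat\beta - \beta)\xrightarrow{d} N\!\big(0,\ \tau(1-\tau)c^{-2}D_0^{-1}\big)$. Because $\hat{\Theta}(X)$ is stochastically increasing we have $\beta_1 > 0$, so $g(b_0,b_1) := (\hat\theta - b_0)/b_1$ is continuously differentiable at $\beta$; since $\hat\theta$ is fixed (it is computed from the observed data) and $\hat\beta \xrightarrow{p}\beta$, it follows that $\hat U = g(\hat\beta)\xrightarrow{p} g(\beta) = U$. Using $\hat\theta - \beta_0 = \beta_1 U$, the gradient is $\nabla g(\beta) = -\beta_1^{-1}(1,U)^\top$, and the delta method yields
\[
\sqrt n(\hat U - U)\xrightarrow{d} N\!\left(0,\ \frac{\tau(1-\tau)}{c^2\beta_1^2}\,(1,U)\,D_0^{-1}\,(1,U)^\top\right).
\]

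Next I would evaluate the quadratic form. Writing the limiting design moments as $D_0 = \begin{pmatrix} 1 & \bar x\\ \bar x & \overline{x^2}\end{pmatrix}$ and $s^2 = \overline{x^2} - \bar x^2 > 0$, a one-line computation gives $(1,U)\,D_0^{-1}\,(1,U)^\top = 1 + (\bar x - U)^2/s^2 \ge 1$, with equality if and only if $\bar x = U$. This proves both assertions at once: for a design of given spread $s^2$ the asymptotic variance of $\hat U$ is minimised by centring the sampling at $U$, and at that optimum the limit law is exactly $N\!\big(0,\ \tau(1-\tau)/(c^2\beta_1^2)\big)$, as claimed.

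The main obstacle is not this algebra but checking that the quantile-regression central limit theorem genuinely applies here, since the covariates $x_i$ are chosen design points rather than an i.i.d.\ sample and the ``responses'' $\hat\theta_i$ are themselves estimators whose law is only assumed to have a linear $\tau$-quantile and a locally constant density there. I would therefore verify the Bahadur-type expansion $\hat\beta - \beta = D_1^{-1}\tfrac1n\sum_i w_i\big(\tau - I(\hat\theta_i < w_i^\top\beta)\big) + o_p(n^{-1/2})$ under the stated design conditions, confirm that the summands form a triangular array satisfying a Lindeberg condition so that the CLT applies, and then the conclusion follows from Slutsky's theorem and the delta method exactly as above.
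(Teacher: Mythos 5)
Your proposal is correct and follows essentially the same route as the paper: invoke Koenker's asymptotic normality for the quantile-regression coefficients, push it through $\hat U=(\hat\theta-\hat\beta_0)/\hat\beta_1$ by the delta method, and observe that the resulting quadratic form $\overline{x^2}-2\bar xU+U^2$ (equivalently $1+(\bar x-U)^2/s^2$ in your cleaner parametrisation) is minimised at $\bar x=U$, yielding the stated limit $N\bigl(0,\tau(1-\tau)/(c^2\beta_1^2)\bigr)$. Your additional care --- deriving the collapse of the sandwich $D_1^{-1}D_0D_1^{-1}$ under $g_i\equiv c$ and flagging the fixed-design/Bahadur-expansion verification --- only tightens what the paper takes for granted under its ``general regulatory conditions.''
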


    \begin{proof}
        First, let $D = \lim\limits_{n\to\infty}n^{-1} \sum_{i=1}^{n}x_i{x_i}^T$.
        The asymptotic distribution of $\hat{\beta}$ is~\citep{Koenker2013}:
        $$
        \sqrt{n}(\hat{\beta} - \beta) \xrightarrow{d} N(0, w^2D^{-1}),\,
        w^2 = \frac{\tau(1-\tau)}{c^2} ,
        $$
        If we denote
        $$
        D = \begin{bmatrix}
        1 & \bar{x} \\
        \bar{x} & \overline{x^2}
        \end{bmatrix} ,
        $$
        the inverse is given by

        $$
        D^{-1} = \frac{1}{V(x)}
        \begin{bmatrix}
        \overline{x^2} & -\bar{x} \\
        -\bar{x} & 1
        \end{bmatrix}
        ,  \,
        V(x) = \overline{x^2} - {\bar{x}}^2 .
        $$

        The estimate of the endpoint is given by  $\hat{U} = \frac{\hat{\theta} - \hat{\beta}_0}{\hat{\beta}_1}$.
        Now, define $h(\beta_0, \beta_1) = \frac{\hat{\theta} - \beta_0}{\beta_1}$,
        and according to the Delta method
        $$
        \sqrt{n}(\hat{U} - U) \xrightarrow{d} N(0, w^2\cdot \bigtriangledown h^T \cdot D^{-1} \cdot \bigtriangledown h) ,
        $$
        Where $\bigtriangledown h$ is the gradient of h:
        $\bigtriangledown h = \left(\frac{-1}{\beta_1}, \frac{\beta_0 - \hat{\theta}}{\beta_1^2} \right)$. Multiplying the elements of the variance we get

        $$
        \lim\limits_{n \to \infty}\mathrm{Var}(\sqrt{n} \cdot \hat{\theta}) = \frac{w^2}{V(x)\beta_1^2}(\overline{x^2} - 2\bar{x}U + U^2) ,
        $$
        which achieves a minimum whenever $U = \bar{x}$, So the optimal sampling for the quantile regression would be such that the true endpoint is in the center of mass of the sample points.
        Under the optimal sampling we get a variance of

        $$
        \lim\limits_{n \to \infty}\mathrm{Var}(\sqrt{n} \cdot \hat{\theta}) = \frac{w^2}{V(x)\beta_1^2} \cdot V(x) = \frac{w^2}{\beta_1^2} ,
        $$
        and therefore
        $$
        \sqrt{n}(\hat{U} - U) \xrightarrow{d} N \left(0, \frac{\tau(1-\tau)}{c^2} \cdot \frac{1}{\beta_1^2}\right) .
        $$

    \end{proof}

    \begin{proposition}
        If g belongs to a location family, that is $g_\theta(x) = g(x-\theta)$ for a common density function $g$, then $g_i(\xi_i(\tau))$ are fixed, $\beta_1 = 1$ and the asymptotic variance is equal to $\frac{\tau(1-\tau)}{c^2}$, the variance of the Robbins-Monro estimate.

    \end{proposition}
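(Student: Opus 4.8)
The plan is to obtain this proposition as a direct specialization of Proposition~\ref{prop:distribution}, supplemented by a one-line identification of the constant $c$ with the slope of $M$ at the endpoint. First, if $G$ denotes the common CDF of $\hat\Theta(X)$ centered at $0$, then under $\theta$ its CDF is $G(\cdot-\theta)$, so its $\tau$-quantile is $\xi_\theta(\tau)=\theta+\xi_0(\tau)$ with $\xi_0(\tau)=G^{-1}(\tau)$. Hence the conditional quantile used by the regression is exactly linear in $\theta$ with intercept $\beta_0=\xi_0(\tau)$ and slope $\beta_1=1$. Moreover $g_\theta(\xi_\theta(\tau))=g(\xi_\theta(\tau)-\theta)=g(\xi_0(\tau))$, which does not depend on $\theta$; thus $g_i\equiv c:=g(\xi_0(\tau))$ at every design point $x_i$, so the fixed-quantile-density assumption of Proposition~\ref{prop:distribution} holds. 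Substituting $\beta_1=1$ into the conclusion of that proposition gives $\sqrt{n}(\hat U-U)\xrightarrow{d} N\!\left(0,\ \tau(1-\tau)/c^2\right)$.

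It then remains to recognize $\tau(1-\tau)/c^2$ as the asymptotic variance of the Robbins--Monro estimate. By definition $U$ solves $M(U)=1-\alpha$ with $M(\theta)=P_\theta(\hat\Theta(X)>\hat\theta)$, which is to say $\hat\theta=\xi_U(\alpha)$, so the quantile level the regression targets is $\tau=\alpha$. In the location family $M(\theta)=1-G(\hat\theta-\theta)$, hence $M'(\theta)=g(\hat\theta-\theta)$ and $M'(U)=g(\hat\theta-U)=g(\xi_U(\alpha)-U)=g(\xi_0(\alpha))=c$. The Robbins--Monro recursion with gain $a/n$ and Bernoulli responses is asymptotically normal with $\sqrt{n}(U_n-U)\xrightarrow{d} N\!\bigl(0,\ a^2\sigma^2/(2aM'(U)-1)\bigr)$, where $\sigma^2$ is the limiting response variance $\alpha(1-\alpha)$; the efficient choice $a=1/M'(U)$ (which satisfies the stability condition $2aM'(U)=2>1$) minimizes this to $\sigma^2/M'(U)^2=\alpha(1-\alpha)/c^2=\tau(1-\tau)/c^2$, matching the expression above.

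The main obstacle, and essentially the only step requiring care, is verifying that the two occurrences of ``$c$'' genuinely coincide: the common conditional-quantile density $g_\theta(\xi_\theta(\tau))$ appearing in Proposition~\ref{prop:distribution} must be literally the same number as the slope $M'(U)$ that sets the Cramer-Rao-optimal Robbins--Monro step size. That equality is precisely the chain-rule identity $M'(\theta)=g(\hat\theta-\theta)$ evaluated at the endpoint $\theta=U$. Everything else is a mechanical substitution into Proposition~\ref{prop:distribution} and the standard $1/n$-gain Robbins--Monro central limit theorem referenced earlier.
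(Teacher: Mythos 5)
Your argument is correct and follows essentially the same route as the paper: write the location-family quantile as $\xi_\theta(\tau)=\theta+G^{-1}(\tau)$, conclude that the conditional quantile is linear with slope $\beta_1=1$ and that the density at the quantile is the constant $c=g(G^{-1}(\tau))$, then substitute into Proposition~\ref{prop:distribution}. Your extra step verifying that $\tau(1-\tau)/c^2$ coincides with the optimally tuned Robbins--Monro asymptotic variance (via the identity $M'(U)=g(\hat\theta-U)=c$ with $\tau=\alpha$, and the standard $1/n$-gain RM limit theorem) is a sound supplement to a claim the paper states but does not itself prove.
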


    \begin{proof}
        Let $G_{x_i}(x)$ be the distribution function of $\hat{\Theta}(X)   $ given $\theta=x_i$, and let $G$ be the standard distribution function of the family.

        To see that the quantile is linear with slope 1 notice that
        $$
        G_{x_i}(\xi_i(\tau)) = G(\xi_i(\tau) - x_i) = \tau .
        $$

        Now, for a continuous G
        $$
        \xi_i(\tau) - x_i = G^{-1}(\tau) \implies \xi_i(\tau) = G^{-1}(\tau) + x_i ,
        $$
        and $\xi_i(\tau)$ is indeed linear in $x_i$ with slope 1 and density:
        $$
        g_i(\xi_i(\tau)) = g_i(G^{-1}(\tau)+x_i) = g(G^{-1}(\tau) + x_i - x_i) = g(G^{-1}(\tau)) ,
        $$
        which does not depend on $x_i$.

        Finally, we can use \Cref{prop:distribution} to get that:

        $$
        \sqrt{n}(\hat{U} - U) \xrightarrow{d} N \left(0,  \frac{\tau(1-\tau)}{c^2} \right) .
        $$

    \end{proof}

    The efficiency of the RM procedure depends on an optimal determination of the step size constant, which is unknown but can be estimated adaptively. The efficiency of the quantile regression method depends on the centering of the data points around the true endpoint, which is again unknown but estimated adaptively.
    For a large number of iterations the two methods indeed perform comparably, but for small sample sizes we show that the adaptive quantile regression gives lower MSEs while using the same number of iterations.

    \subsubsection{Suggested Algorithm}
    The asymptotic results from the last section suggest that the lowest variance should be reached when the true endpoint is in the center of the data. Consequently, we suggest estimating the true endpoint by the current estimate at each iteration and to sample a new $x$ that centers the data around it, as described in \Cref{algo:qauntreg}.

    \begin{algorithm}

        Choose a grid of points $\theta^*=(\theta_1, ..., \theta_s)$

        Initialize a list $\hat{\theta}^*$ = []

        \For{$j\leftarrow 1$ \KwTo $s$}{
        Sample $x_j$ from $f_{\theta_j}$

        $\hat{\theta}_j \leftarrow \hat{\Theta}(x_j)$

        Append $\hat{\theta}_j$  to $\hat{\theta}^*$
    }
        \For{$i\leftarrow 1$ \KwTo $N$}{

            Estimate the conditional quantile function: $\hat{Q}_\alpha(\theta) = \hat{\beta}_0 + \hat{\beta}_1\theta$, based on $\theta^*$ and $\hat{\theta}^*$

            $U_i \leftarrow \frac{\hat{\theta} - \hat{\beta}_0}{\hat{\beta}_1}$.

            $\theta_{s+i} \leftarrow (s+i)\cdot U_i - \sum_{j=1}^{s+i-1}\theta_j$

            Append $\theta_{s+i}$ to $\theta^*$

            Calculate $\hat{\theta}_{s+i}$ based on a bootstrap sample with $\theta=\theta_{s+i}$

            Append $\hat{\theta}_{s+i}$ to $\hat{\theta}^*$
        }

        $\hat{U} \leftarrow U_N$

    \caption{Finding the upper endpoint of a $100(1-\alpha)\%$ confidence interval by adaptive quantile regression}

    \label{algo:qauntreg}
    \end{algorithm}

    Notice that if a two sided confidence interval is needed, the points sampled for the upper end can be used in the search for lower end, decreasing the variance further.

    \subsection{Measuring Accuracy and Linearity}

    Two things should be in mind when considering the validity of our approach: The correctness of the linear model, and the Monte-Carlo error that results from using finitely many sample points.
    For testing linearity, we offer to simply test the model against a polynomial model of higher order using a likelihood ratio test or a Wald test as described in \cite{Koenker2013}. If the linear model is correct, then the extra terms should not be significant. We describe our suggestion for the non linear case in the next section.

    For assessing the magnitude of error resulting from the procedure, we offer a simple approach for constructing confidence intervals for the upper end point $U$ that is based on the inversion of the quantile regression confidence interval.

    \cite{Koenker2013} describes and compares different methods to construct confidence intervals for the quantile regression prediction, $Q_\alpha(x)$. Any of them can be inverted in order to get confidence intervals for the endpoint $U$.
    The confidence intervals given by Koenker give us CIs for $Q_\alpha(x)$. In order to construct confidence intervals for the quantity of interest $U$ they have to be inverted, this is accomplished by the same inversion methodology that is used in order to calculate $\hat{U}$. Essentially, the CIs for $Q_\alpha(x)$ are the acceptance region for $U$ and thus, the point where the lower bound of the quantile regression is equal to $\hat{\theta}$ is the upper bound for $U$, as made precise by \Cref{prop:ci_ci}.

    \begin{proposition}
        Let $\hat{Q}_\alpha(\theta) = \hat{\beta}_0 + \hat{\beta}_1\cdot \theta$ be the quantile regression line for the $\alpha$ quantile. Denote $Q_U$ as the point for which $P(\hat{Q}_\alpha(Q_U) < \hat{\theta}) = \delta$ for some $0 \leq \delta \leq 1$. Then $P(\hat{U} > Q_U) = \delta$, that is, $Q_U$ is the $\delta$ level upper bound for $U$.

        \label{prop:ci_ci}
    \end{proposition}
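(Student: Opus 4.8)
The plan is to exploit the fact that $\hat U$ is, by construction, the \emph{unique} solution of $\hat Q_\alpha(\theta)=\hat\theta$, so that comparing the random quantity $\hat U$ to the fixed point $Q_U$ is the same as comparing the fitted quantile line evaluated at $Q_U$ to $\hat\theta$. First I would recall from the definition of the AQR estimate that $\hat\beta_0+\hat\beta_1\hat U=\hat\theta$, equivalently $\hat U=(\hat\theta-\hat\beta_0)/\hat\beta_1$, where the randomness lives entirely in the fitted coefficients $(\hat\beta_0,\hat\beta_1)$, while $Q_U$ is a deterministic number (for the purposes of the argument $\hat\theta$ may be held fixed, and if one prefers to treat it as random the identity below holds pointwise in $\hat\theta$ and hence survives integrating it out).

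Second, I would use the standing assumption that $\hat\Theta(X)$ is stochastically increasing in $\theta$, which forces the fitted slope $\hat\beta_1$ to be positive (almost surely, under the regularity conditions already assumed for the quantile regression), so that $\theta\mapsto\hat Q_\alpha(\theta)=\hat\beta_0+\hat\beta_1\theta$ is strictly increasing. Then for the fixed point $Q_U$,
$$
\hat Q_\alpha(Q_U)<\hat\theta \iff \hat\beta_0+\hat\beta_1 Q_U<\hat\beta_0+\hat\beta_1\hat U \iff Q_U<\hat U,
$$
where the last equivalence uses $\hat\beta_1>0$. Hence the two events $\{\hat Q_\alpha(Q_U)<\hat\theta\}$ and $\{\hat U>Q_U\}$ coincide, and taking probabilities gives $P(\hat U>Q_U)=P(\hat Q_\alpha(Q_U)<\hat\theta)=\delta$, which is precisely the assertion that $Q_U$ is a $\delta$-level upper bound for $U$.

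For completeness I would also note that $Q_U$ is well defined and unique: the map $q\mapsto P(\hat Q_\alpha(q)<\hat\theta)$ is nonincreasing and, under a mild continuity assumption on the law of $(\hat\beta_0,\hat\beta_1)$, strictly decreasing from $1$ to $0$, so every level $\delta\in(0,1)$ is attained at exactly one point. The only real subtlety — and the step I expect to require the most care in the write-up — is the monotonicity of $\hat Q_\alpha$: one must justify that $\hat\beta_1>0$ with probability one (or restrict to that event, whose complement has negligible probability under the stated conditions) before the chain of equivalences above is legitimate; everything else is a direct consequence of the linearity of the fitted quantile function and the defining equation of $\hat U$.
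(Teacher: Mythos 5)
Your proof is correct and follows essentially the same route as the paper's: rewrite the event $\{\hat{Q}_\alpha(Q_U) < \hat{\theta}\}$ as $\{\hat{\beta}_0 + \hat{\beta}_1 Q_U < \hat{\theta}\}$ and use the defining identity $\hat{U} = (\hat{\theta} - \hat{\beta}_0)/\hat{\beta}_1$ to see it coincides with $\{\hat{U} > Q_U\}$. The only difference is that you make explicit the requirement $\hat{\beta}_1 > 0$ when dividing through (and remark on the existence and uniqueness of $Q_U$), a point the paper's proof passes over silently; this is a reasonable bit of added care rather than a different argument.
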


    \begin{proof}
        By definition, $\hat{Q}_\alpha(\theta) = \hat{\beta}_0 + \hat{\beta}_1 \cdot \theta$. Therefore:
        $$
            \delta = P(\hat{Q}_\alpha(Q_U) < \hat{\theta}) = P(\hat{\beta}_0 + \hat{\beta}_1 \cdot Q_U < \hat{\theta}) = P\left(Q_U < \frac{\hat{\theta} - \hat{\beta}_0}{\beta_1}\right) ,
        $$
        but $\hat{U} = \frac{\hat{\theta} - \hat{\beta}_0}{\beta_1}$, and so:
        $$
         \delta = P\left(Q_U < \frac{\hat{\theta} - \hat{\beta}_0}{\beta_1}\right) = P(\hat{U} > Q_U).
        $$
    \end{proof}

    \section{Evaluation}

    \subsection{When the linear model is true}
    In order to compare the methods on simulated data, we implemented the RM algorithm with an adaptive determination of the step size constant as described in \cite{PaulH.GarthwaiteandStephenT.Buckland2016}. We used the same method for both RM (Robbins-Monro) and BRM (Binary Robbins-Monro).
    In order to determine the starting point for the RM and BRM methods we randomly chose a point from a normal distribution centered around the correct endpoint, as done in \cite{Joseph2004}.

    We start by comparing the three methods (RM, BRM, AQR) for the purpose of constructing a one-sided confidence interval for four common estimates. The mean of a normal distribution, the standard deviation of a normal distribution, the shape of a gamma distribution and the mean of a logistic distribution. The root mean squared error of the different methods is shown in \Cref{fig:good_simultaions}.

    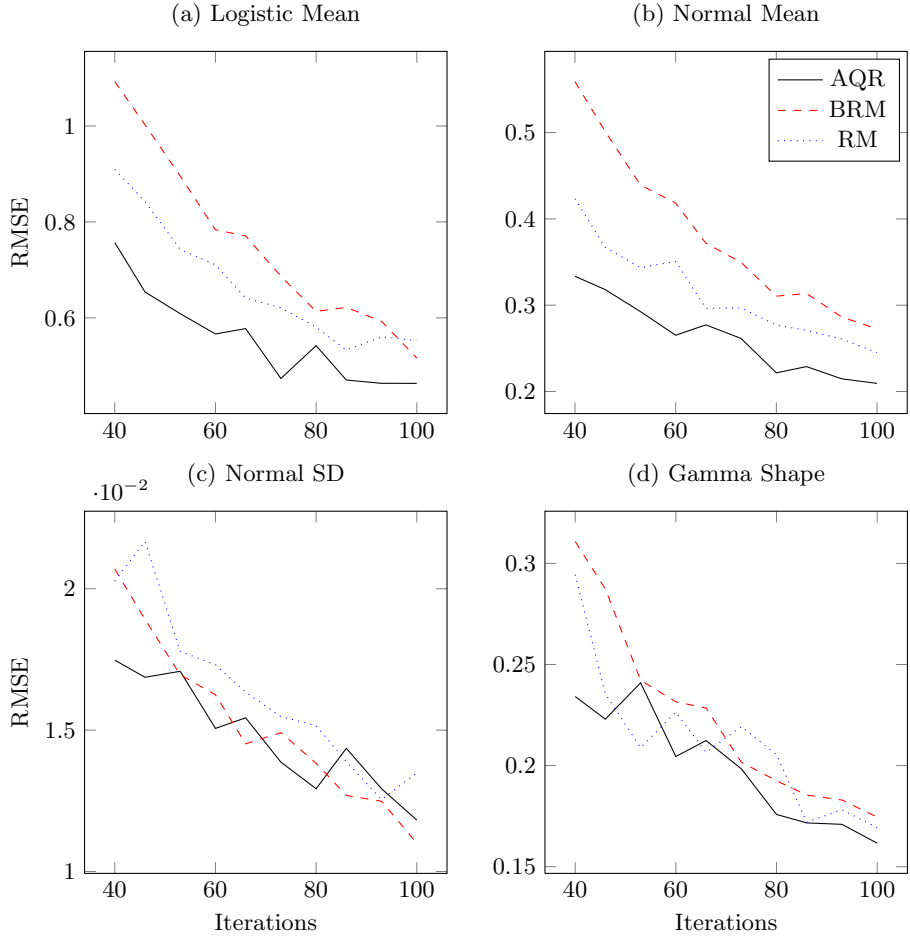
\begin{figure}
        \begin{tikzpicture}
\begin{axis}[
title=(a) Logistic Mean,
height = 6.4cm, width=6.4cm,
name=nw,
ylabel=RMSE]
\addplot[color=black] coordinates {
	( 40 , 0.7570677 ) 
	( 46 , 0.6541243 ) 
	( 53 , 0.6087789 ) 
	( 60 , 0.5664907 ) 
	( 66 , 0.5776405 ) 
	( 73 , 0.473694 ) 
	( 80 , 0.5420933 ) 
	( 86 , 0.4704652 ) 
	( 93 , 0.4635873 ) 
	( 100 , 0.4635075 ) 
};

\addplot[color=red, dashed] coordinates {
	( 40 , 1.092513 ) 
	( 46 , 1.002498 ) 
	( 53 , 0.8958049 ) 
	( 60 , 0.7835433 ) 
	( 66 , 0.7707501 ) 
	( 73 , 0.686801 ) 
	( 80 , 0.6135859 ) 
	( 86 , 0.621773 ) 
	( 93 , 0.592727 ) 
	( 100 , 0.5161826 ) 
};

\addplot[color=blue, dotted] coordinates {
	( 40 , 0.9093762 ) 
	( 46 , 0.842583 ) 
	( 53 , 0.7428954 ) 
	( 60 , 0.7105772 ) 
	( 66 , 0.6413429 ) 
	( 73 , 0.6212259 ) 
	( 80 , 0.5804614 ) 
	( 86 , 0.5330147 ) 
	( 93 , 0.5604354 ) 
	( 100 , 0.5526607 ) 
};

\end{axis}

\begin{axis}[
title=(b) Normal Mean,
at={($(nw.east)+(1.3cm,0)$)}, anchor=west, height = 6.4cm, width=6.4cm]

\addplot[color=black] coordinates {
( 40 , 0.3335176 ) 
( 46 , 0.3181102 ) 
( 53 , 0.2926253 ) 
( 60 , 0.2652139 ) 
( 66 , 0.2772224 ) 
( 73 , 0.2614879 ) 
( 80 , 0.2216325 ) 
( 86 , 0.2288666 ) 
( 93 , 0.2146825 ) 
( 100 , 0.2094354 ) 
};

\addlegendentry{AQR}

\addplot[color=red, dashed] coordinates {
	( 40 , 0.5591433 ) 
	( 46 , 0.501417 ) 
	( 53 , 0.4393057 ) 
	( 60 , 0.4183443 ) 
	( 66 , 0.3717805 ) 
	( 73 , 0.349642 ) 
	( 80 , 0.3105219 ) 
	( 86 , 0.3135361 ) 
	( 93 , 0.2863068 ) 
	( 100 , 0.2725241 ) 
};
\addlegendentry{BRM}

\addplot[color=blue, dotted] coordinates {
	( 40 , 0.4234334 ) 
	( 46 , 0.3669267 ) 
	( 53 , 0.3435765 ) 
	( 60 , 0.3507568 ) 
	( 66 , 0.2963637 ) 
	( 73 , 0.2968345 ) 
	( 80 , 0.276984 ) 
	( 86 , 0.2708163 ) 
	( 93 , 0.2608199 ) 
	( 100 , 0.2451416 ) 
};
\addlegendentry{RM}
\end{axis}

\begin{axis}[
name=sw,
title=(c) Normal SD,
at={($(nw.south)+(0,-1.3cm)$)}, anchor=north, height = 6.4cm, width=6.4cm,
xlabel=Iterations,
ylabel=RMSE]

\addplot[color=black] coordinates {
	( 40 , 0.01747514 ) 
	( 46 , 0.01686856 ) 
	( 53 , 0.0170804 ) 
	( 60 , 0.01506089 ) 
	( 66 , 0.01543481 ) 
	( 73 , 0.01386702 ) 
	( 80 , 0.01292699 ) 
	( 86 , 0.01435504 ) 
	( 93 , 0.01292635 ) 
	( 100 , 0.01181893 ) 
};

\addplot[color=red, dashed] coordinates {
	( 40 , 0.02069632 ) 
	( 46 , 0.01891923 ) 
	( 53 , 0.01695601 ) 
	( 60 , 0.0162492 ) 
	( 66 , 0.0145147 ) 
	( 73 , 0.01491003 ) 
	( 80 , 0.01382853 ) 
	( 86 , 0.01268845 ) 
	( 93 , 0.012485 ) 
	( 100 , 0.0110001 ) 
};

\addplot[color=blue, dotted] coordinates {
	( 40 , 0.02026935 ) 
	( 46 , 0.02167374 ) 
	( 53 , 0.01778594 ) 
	( 60 , 0.01730979 ) 
	( 66 , 0.01634211 ) 
	( 73 , 0.01548114 ) 
	( 80 , 0.01515896 ) 
	( 86 , 0.01388369 ) 
	( 93 , 0.01254316 ) 
	( 100 , 0.01348942 ) 
};

\end{axis}

\begin{axis}[
title=(d) Gamma Shape ,
at={($(sw.east)+(1.3cm,0)$)}, anchor=west, height = 6.4cm, width=6.4cm,
xlabel=Iterations]

\addplot[color=black] coordinates {
	( 40 , 0.2341691 ) 
	( 46 , 0.2229783 ) 
	( 53 , 0.2409413 ) 
	( 60 , 0.2044859 ) 
	( 66 , 0.2123854 ) 
	( 73 , 0.1984245 ) 
	( 80 , 0.1758362 ) 
	( 86 , 0.1716512 ) 
	( 93 , 0.1709577 ) 
	( 100 , 0.1616169 ) 
};

\addplot[color=red, dashed] coordinates {
	( 40 , 0.3108724 ) 
	( 46 , 0.2873334 ) 
	( 53 , 0.2421934 ) 
	( 60 , 0.2315572 ) 
	( 66 , 0.2284827 ) 
	( 73 , 0.2017036 ) 
	( 80 , 0.1924742 ) 
	( 86 , 0.1854545 ) 
	( 93 , 0.1829803 ) 
	( 100 , 0.174496 ) 
};

\addplot[color=blue, dotted] coordinates {
	( 40 , 0.2944362 ) 
	( 46 , 0.2354079 ) 
	( 53 , 0.2088559 ) 
	( 60 , 0.2264575 ) 
	( 66 , 0.206537 ) 
	( 73 , 0.2193266 ) 
	( 80 , 0.2053649 ) 
	( 86 , 0.1720365 ) 
	( 93 , 0.1781482 ) 
	( 100 , 0.1692832 ) 
};

\end{axis}

\end{tikzpicture}
        \caption{The root mean squared error of the Robbins-Monro (RM), Binary Robbins-Monro (BRM) and Adaptive Quantile Regression (AQR) for estimating the upper endpoint of a CI for \textbf{(a)} the mean of a standard logistic distribution \textbf{(b)} the mean of a N(0, 0.1) distribution \textbf{(c)} the standard deviation of a N(0, 0.1) distribution \textbf{(d)} the shape parameter of a Gamma(10, 1) distribution. AQR is more accurate given 40 iterations in all simulations and remains more accurate for up to 100 simulation for the logistic and normal mean estimators.}
        \label{fig:good_simultaions}
    \end{figure}

    \Cref{fig:good_simultaions} shows that with 40 iterations, the AQR procedure is more accurate for all four distributions, and it remains more accurate with up to 100 iterations for the estimation of the mean of the logistic and normal distribution.

    However, regression procedures are usually sensitive to outliers and AQR is no exception, as verified by measuring the error in estimating the endpoint of a CI for the shape of a Gamma(5,1) distribution, shown in \Cref{fig:gamma_5}.

    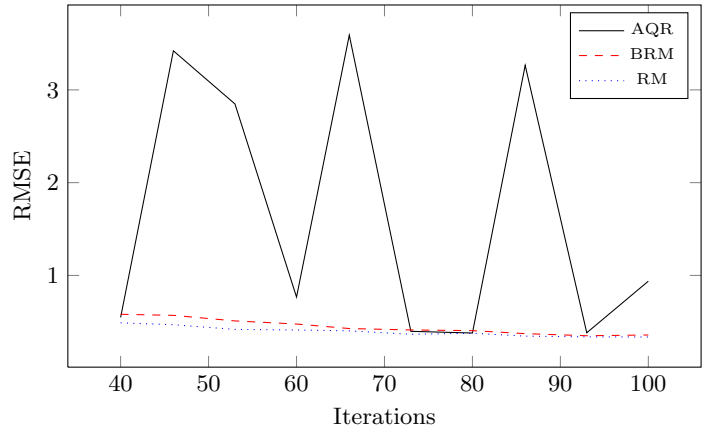
\begin{figure}
        \begin{tikzpicture}

\begin{axis}
[height = 6.4cm, width=10cm,
xlabel=Iterations,
ylabel=RMSE,
legend style = {font = \tiny}]

\addplot[color=black] coordinates {
( 40 , 0.547262 ) 
( 46 , 3.421273 ) 
( 53 , 2.848314 ) 
( 60 , 0.7666576 ) 
( 66 , 3.589957 ) 
( 73 , 0.3972672 ) 
( 80 , 0.3790674 ) 
( 86 , 3.264053 ) 
( 93 , 0.381777 ) 
( 100 , 0.9377178 ) 
};
\addlegendentry{AQR}

\addplot[color=red, dashed] coordinates {
( 40 , 0.5809274 ) 
( 46 , 0.5692915 ) 
( 53 , 0.5095134 ) 
( 60 , 0.4760328 ) 
( 66 , 0.4272468 ) 
( 73 , 0.4122189 ) 
( 80 , 0.4046629 ) 
( 86 , 0.3724643 ) 
( 93 , 0.3495907 ) 
( 100 , 0.3578571 )  	
};
\addlegendentry{BRM}

\addplot[color=blue, dotted] coordinates {
( 40 , 0.4889279 ) 
( 46 , 0.4685582 ) 
( 53 , 0.4179734 ) 
( 60 , 0.4118114 ) 
( 66 , 0.4021339 ) 
( 73 , 0.3653558 ) 
( 80 , 0.3767114 ) 
( 86 , 0.3454104 ) 
( 93 , 0.3375941 ) 
( 100 , 0.3364234 ) 
};
\addlegendentry{RM}
\end{axis}
\end{tikzpicture}
        \caption{Comparison by root MSE of adaptive quantile regression (AQR), Robbins-Monro (RM) and Binary Robbins-Monro (BRM) methods for determination of the upper endpoint of a confidence interval for a Gamma(5,1) distribution.}
        \label{fig:gamma_5}
    \end{figure}

    The results of this section suggest that for distributions that are not prone to outliers, and when only a small number of iterations is possible, AQR gives a better accuracy than the other methods.

    \subsection{When the linear model is false}

    The quantile regression methodology that we described relies on the linear model being true. In cases when the conditional quantile function of interest is not linear it is not expected to work properly. a mild non-linearity, such as the one of the binomial distribution with $n=30$, will hurt the accuracy of the method somewhat, as demonstrated in \Cref{fig:binomial}.

    \begin{figure}
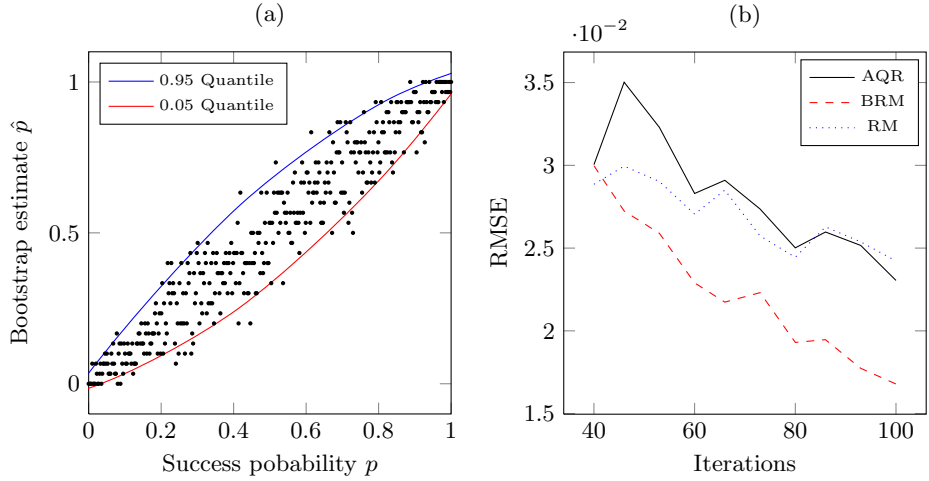

        \include{figure_binomial_dist}
        \caption{\textbf{(a)} The non linear quantiles of an estimate for the success probability in the binomial distribution with $n=30$. \textbf{(b)} Comparison of adaptive quantile regression (AQR), Robbins-Monro (RM) and Binary Robbins-Monro (BRM) methods for determination of the upper endpoint of a confidence interval for p by their root MSE.}
        \label{fig:binomial}
    \end{figure}

    A more severe non linearity will cause the linear modeling to give very inaccurate results. We show this with a simulation using an $N(\theta, ((1+\theta^2)/8)^2)$ distribution, where the goal is to construct a CI for $\theta$. The results are shown in \Cref{fig:non_linear}.

    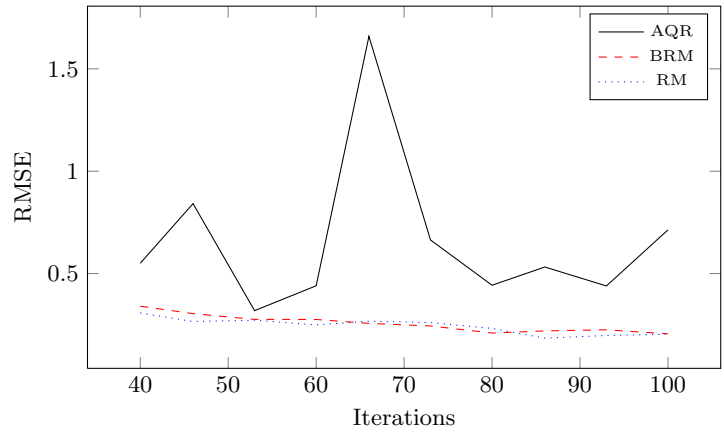
\begin{figure}
        \centering
        \begin{tikzpicture}

\begin{axis}
[height = 6.4cm, width=10cm,
xlabel=Iterations,
ylabel=RMSE,
legend style = {font = \tiny}]

\addplot[color=black] coordinates {
( 40 , 0.5503392 ) 
( 46 , 0.8416941 ) 
( 53 , 0.3188305 ) 
( 60 , 0.4403835 ) 
( 66 , 1.658739 ) 
( 73 , 0.6640667 ) 
( 80 , 0.4430132 ) 
( 86 , 0.5320924 ) 
( 93 , 0.4394383 ) 
( 100 , 0.7133102 ) 
};
\addlegendentry{AQR}

\addplot[color=red, dashed] coordinates {
( 40 , 0.3405423 ) 
( 46 , 0.3042753 ) 
( 53 , 0.2762472 ) 
( 60 , 0.2756966 ) 
( 66 , 0.2566476 ) 
( 73 , 0.2440453 ) 
( 80 , 0.2095168 ) 
( 86 , 0.2199306 ) 
( 93 , 0.2252429 ) 
( 100 , 0.2054483 )
};
\addlegendentry{BRM}

\addplot[color=blue, dotted] coordinates {
( 40 , 0.3075648 ) 
( 46 , 0.265793 ) 
( 53 , 0.2722264 ) 
( 60 , 0.249198 ) 
( 66 , 0.2672979 ) 
( 73 , 0.2604936 ) 
( 80 , 0.2320176 ) 
( 86 , 0.1844735 ) 
( 93 , 0.1974254 ) 
( 100 , 0.204079 ) 
};
\addlegendentry{RM}
\end{axis}
\end{tikzpicture}
        \caption{Comparison of adaptive quantile regression (AQR), Robbins-Monro (RM) and Binary Robbins-Monro (BRM) methods for determination of the upper endpoint for $\theta$ in $N(\theta, ((1+\theta^2)/8)^2)$ distribution, which has a non linear quantile.}
        \label{fig:non_linear}
    \end{figure}

    Still, the situation can be rectified by testing for linearity as described above, and using one of the following measures when the linearity test fails:

    \begin{enumerate}
        \item Fitting a polynomial model and using its prediction.
        \item Fitting an a-parametric model and using its prediction.
        \item Using the prediction of a polynomial model as a hot-start for a Robbins-Monro search.
    \end{enumerate}

    We recommend the last method, and evaluate it compared to using all the iterations for the stochastic approximation.

    The non-linear quantile is created by simulating a Normal distribution with mean $\theta$ and standard deviation $\theta(5-\theta)/8$. The quantile regression is done by taking 20 evenly spaced points, fitting a quadratic polynomial to them using quantile regression, and using the prediction to start a Robbins-Monro procedure. The Robbins-Monro procedure starts as if it already did 20 steps.
    For the Binary Robbins-Monro procedure, we just started the search from the predicted point and halved the value for the prior variance. We refer to the procedures of using the RM or BRM with the quantile regression prediction as
    a starting point as Hot Started RM and Hot Started BRM, respectively.

    As shown in \Cref{fig:hot_started}, we see practically no loss of accuracy from using the first 20 points to predict a starting point for the RM procedure, after concluding that the linearity assumption does not hold.

    \begin{figure}[H]
        \centering
        \begin{tikzpicture}

\begin{axis}
[height = 6.4cm, width=10cm,
xlabel=Iterations,
ylabel=RMSE,
legend style = {font = \tiny}]

\addplot[color=red, dashed] coordinates {
( 40 , 0.3217083 ) 
( 46 , 0.3189682 ) 
( 53 , 0.2861746 ) 
( 60 , 0.2777733 ) 
( 66 , 0.2546992 ) 
( 73 , 0.2374106 ) 
( 80 , 0.237141 ) 
( 86 , 0.232191 ) 
( 93 , 0.2176928 ) 
( 100 , 0.217047 ) 	
};
\addlegendentry{BRM}

\addplot[color=blue, dotted] coordinates {
( 40 , 0.28774 ) 
( 46 , 0.2861608 ) 
( 53 , 0.276704 ) 
( 60 , 0.2474726 ) 
( 66 , 0.2412809 ) 
( 73 , 0.2345295 ) 
( 80 , 0.2309472 ) 
( 86 , 0.215003 ) 
( 93 , 0.1993267 ) 
( 100 , 0.195022 ) 
};
\addlegendentry{RM}

\addplot[color=orange, densely dotted] coordinates {
	( 40 , 0.3633196 ) 
	( 46 , 0.3792746 ) 
	( 53 , 0.3332031 ) 
	( 60 , 0.2921122 ) 
	( 66 , 0.2825606 ) 
	( 73 , 0.2742651 ) 
	( 80 , 0.248625 ) 
	( 86 , 0.2576523 ) 
	( 93 , 0.2402489 ) 
	( 100 , 0.2181691 ) 
};
\addlegendentry{Hot Started BRM}

\addplot[color=black] coordinates {
	( 40 , 0.2906063 ) 
	( 46 , 0.3048459 ) 
	( 53 , 0.2525326 ) 
	( 60 , 0.2519557 ) 
	( 66 , 0.2529586 ) 
	( 73 , 0.2264071 ) 
	( 80 , 0.2204733 ) 
	( 86 , 0.211576 ) 
	( 93 , 0.2141797 ) 
	( 100 , 0.1965478 ) 
};
\addlegendentry{Hot Started RM}

\end{axis}
\end{tikzpicture}
        \caption{Comparison by RMSE of the Robbins-Monro (RM), Binary Robbins-Monro (BRM), Hot Started Robbins-Monro and Hot Started Binary Robbins-Monro for determination of the upper endpoint of a CI for $\theta$ in a $N(\theta, (\theta(5-\theta)/8)^2)$ distribution, which has a non-linear quantile.}
        \label{fig:hot_started}
    \end{figure}
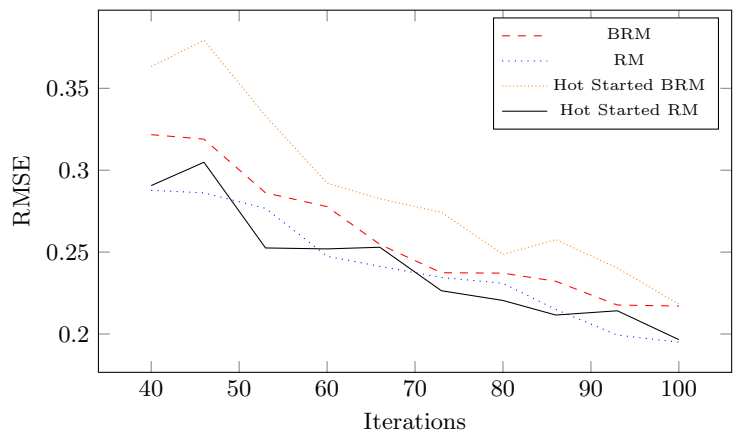

\section{Confidence intervals for demographic parameters in population models}

In this section, we apply the adaptive quantile regression methodology we have described above to the case of demographic parameters in models of population genetics. An important goal in population genetics is the estimation of parameters of population models from present-day DNA sequences, with the intention of learning information about fundamental characteristics of the population, such as its average mutation rate, or about historical and demographic features, such as the historical population size. In addition, these parameters are often used in the estimation of additional quantities of historical or biological interest.

As an example, we focus here on the mutation rate parameter. The mutation rate is defined as the average number of mutations per DNA base-pair, per generation. The mutation rate is of high interest to the genetics community and is used for  the interpretation of mutations implicated in diseases~\citep{Crow2000}, studies of natural selection \citep{McVicker2009} and the study of several aspects of human mutagenesis \citep{Francioli2015}. Moreover, the mutation rate is utilized in the dating of historical events \citep{Li2011}. Different values of the mutation rate significantly impact the dating of such events \citep{Scally2012}. Therefore, it is important to accurately measure the uncertainty in its estimate.

Nevertheless, our current estimates of the human mutation rate vary substantially. A number of different approaches are commonly used in order to estimate the rate, each yielding different results ranging from $1.2\cdot10^{-8}$ for the pedigree based estimators~\citep{Kong2012} to $2.5\cdot10^{-8}$ for the phylogenetic estimators~\citep{Nachman2000}. In this section we focus on the problem of constructing confidence intervals for mutation rate estimates of the second type, and show that the uncertainty in their estimation does not explain their disagreement with the pedigree based estimates.

Due to the complexity of population models, the distribution of mutation rate estimators is usually difficult to characterize or derive analytically. In addition, simulation or sampling from these models is computationally prohibitive. It is therefore important to devise a method that can accurately construct CIs for the mutation rate estimator, while maintaining a feasible computational cost. In the following, we first briefly introduce the Wright-Fisher model and the related coalescent model (for a review, see \cite{Durrett2008}). Finally, we apply the adaptive quantile regression method to construct CIs for the mutation rate under the coalescent model, and show its superiority compared to alternative approaches.

\subsection{The Wright-Fisher model and the Watterson estimator}
The Wright-Fisher model is common model in population genetics. Assume that we have a population of $2N$ individuals, each carrying one of two alleles. According to the Wright-Fisher model, the population in generation $t+1$ is a random choice, with replacement, of $2N$ individuals from generation $t$. To introduce mutations into the model, at each generation, the allele of each offspring is changed with probability $\mu$, where $\mu$ is defined as the mutation rate per individual per generation. Finally, we randomly sample $n$ individuals from the last generation as our observed sample.

We define the scaled mutation rate as $\theta = 4N\mu$. Let $S_n$ be the number of loci in which there is genetic variability in a sample of $n$ individuals. Under the infinite site model~\citep{Kimura1969}, which assumes that mutations do not occur twice at the same locus, it was shown by~\cite{Watterson1975} that the expected value of $S_n$ is
$$
E(S_n) = \theta\cdot\sum_{i=1}^{n-1}\frac{1}{i}\,,
$$
and therefore, that $\theta$ can be estimated by the moment estimator
$$
\hat{\theta} = \frac{s_n}{\sum_{i   =1}^{n-1}\frac{1}{i}}\,.
$$
where $s_n$ is the observed number of variable loci in the sample. \cite{Watterson1975} derived a formula for the variance of $\hat{\theta}$ and showed that it follows an asymptotic normal distribution. However, \cite{Klein1999} showed that the convergence to this distribution is slow and therefore the asymptotic normal distribution is not a good approximation for the Watterson estimator.

\subsection{The coalescent model}

The coalescent model, introduced by \cite{Kingman}, is an approximation to the Wright-Fisher model that makes it possible to conveniently derive analytic results and to sample from the model with lower computational cost. In contrast to the Wright-Fisher model, the coalescent process proceeds backward in time. Beginning with the last generation, a parent is chosen at random from the previous generations for each individual in the sample. If two individuals in the sample selected the same parent, they are said to have coalesced.

Coalescent theory is essential to our application for two main reasons. First, we are only interested in simulating the generations since the most recent common ancestor of the sample, because previous generations do not influence the genetic diversity of the sample. As the coalescent model goes backwards in time, it focuses only on the genealogy of the observed sample, rather than the entire unobserved population. Second, it allows for a much faster simulation, as we can sample the time to the next coalescence event, and ignore the generations in between. For more details on the coalescent simulation, see \cite{msprime}.

\subsection{Constructing confidence intervals for the mutation rate parameter}

In this section, we apply the adaptive quantile regression methodology in order to construct confidence intervals for the mutation rate parameter under the coalescent model and show that it gives more accurate results than the methods based on stochastic approximation.

In order to determine the correct endpoint of a confidence interval to an estimate of $\hat{\mu} = 2.5\cdot10^{-8}$, as estimated by~\cite{Nachman2000}, we simulated 10,000 samples with $n=\text{1,000}$, under the coalescent model with $N=\text{10,000}$ and a genome size of 1,000,000. For the mutation rate, we choose a grid of 50 points in $[-8.5, -6]$ on a log scale. We used the resulting estimates to calculate the sample quantiles for each value of the mutation rate on the grid, and interpolated the sample quantiles (see \Cref{fig:coalescence} (a)). Finally, the endpoint of the confidence interval is calculated using the test-inversion methodology. The resulting 95\% confidence interval is $(1.92\cdot10^{-8},\, 3.32\cdot10^{-8})$, which does not include the pedigree based estimates.

We then applied the adaptive quantile regression (AQR), the Robbins-Monro (RM) and the Binary Robbins-Monro (BRM) methods to estimate the correct endpoint for the confidence interval. We ran each method for 40,50,...,100 iterations, using 300 simulations for each setup. The accuracy is measured by the root mean square error from the correct endpoint presented above. We used msprime \citep{msprime} to simulate the coalescent, and the Watterson estimator in order to estimate the mutation rate.

\begin{figure}
    \begin{tikzpicture}
\begin{axis}[
title=(a),
height = 6.4cm, width=6.4cm,
name=lines,
xlabel=Mutation rate $\theta$,
ylabel=Bootstrap estimate $\hat{\theta}$,
legend style = {legend pos = north west, font = \tiny}]
\addplot[color=black] coordinates {
( -8.5 , -8.5 ) 
( -8.45 , -8.45 ) 
( -8.4 , -8.4 ) 
( -8.35 , -8.35 ) 
( -8.3 , -8.3 ) 
( -8.25 , -8.25 ) 
( -8.2 , -8.2 ) 
( -8.15 , -8.15 ) 
( -8.1 , -8.1 ) 
( -8.05 , -8.05 ) 
( -8 , -8 ) 
( -7.95 , -7.95 ) 
( -7.9 , -7.9 ) 
( -7.85 , -7.85 ) 
( -7.8 , -7.8 ) 
( -7.75 , -7.75 ) 
( -7.7 , -7.7 ) 
( -7.65 , -7.65 ) 
( -7.6 , -7.6 ) 
( -7.55 , -7.55 ) 
( -7.5 , -7.5 ) 
( -7.45 , -7.45 ) 
( -7.4 , -7.4 ) 
( -7.35 , -7.35 ) 
( -7.3 , -7.3 ) 
( -7.25 , -7.25 ) 
( -7.2 , -7.2 ) 
( -7.15 , -7.15 ) 
( -7.1 , -7.1 ) 
( -7.05 , -7.05 ) 
( -7 , -7 ) 
( -6.95 , -6.95 ) 
( -6.9 , -6.9 ) 
( -6.85 , -6.85 ) 
( -6.8 , -6.8 ) 
( -6.75 , -6.75 ) 
( -6.7 , -6.7 ) 
( -6.65 , -6.65 ) 
( -6.6 , -6.6 ) 
( -6.55 , -6.55 ) 
( -6.5 , -6.5 ) 
( -6.45 , -6.45 ) 
( -6.4 , -6.4 ) 
( -6.35 , -6.35 ) 
( -6.3 , -6.3 ) 
( -6.25 , -6.25 ) 
( -6.2 , -6.2 ) 
( -6.15 , -6.15 ) 
( -6.1 , -6.1 ) 
( -6.05 , -6.05 ) 
};
\addlegendentry{Mean Value}

\addplot[color=red, dashed] coordinates {
	( -8.5 , -8.375851 ) 
	( -8.45 , -8.328529 ) 
	( -8.4 , -8.275357 ) 
	( -8.35 , -8.228248 ) 
	( -8.3 , -8.175397 ) 
	( -8.25 , -8.128891 ) 
	( -8.2 , -8.080545 ) 
	( -8.15 , -8.029824 ) 
	( -8.1 , -7.981649 ) 
	( -8.05 , -7.931162 ) 
	( -8 , -7.881271 ) 
	( -7.95 , -7.831778 ) 
	( -7.9 , -7.778992 ) 
	( -7.85 , -7.728576 ) 
	( -7.8 , -7.680758 ) 
	( -7.75 , -7.629016 ) 
	( -7.7 , -7.578034 ) 
	( -7.65 , -7.527803 ) 
	( -7.6 , -7.477694 ) 
	( -7.55 , -7.431917 ) 
	( -7.5 , -7.378983 ) 
	( -7.45 , -7.329007 ) 
	( -7.4 , -7.278464 ) 
	( -7.35 , -7.229868 ) 
	( -7.3 , -7.180147 ) 
	( -7.25 , -7.128125 ) 
	( -7.2 , -7.078731 ) 
	( -7.15 , -7.028574 ) 
	( -7.1 , -6.977896 ) 
	( -7.05 , -6.93225 ) 
	( -7 , -6.878317 ) 
	( -6.95 , -6.827927 ) 
	( -6.9 , -6.780774 ) 
	( -6.85 , -6.731386 ) 
	( -6.8 , -6.680166 ) 
	( -6.75 , -6.631526 ) 
	( -6.7 , -6.5799 ) 
	( -6.65 , -6.531002 ) 
	( -6.6 , -6.47944 ) 
	( -6.55 , -6.431428 ) 
	( -6.5 , -6.377903 ) 
	( -6.45 , -6.328611 ) 
	( -6.4 , -6.278069 ) 
	( -6.35 , -6.232641 ) 
	( -6.3 , -6.178791 ) 
	( -6.25 , -6.129502 ) 
	( -6.2 , -6.078682 ) 
	( -6.15 , -6.027106 ) 
	( -6.1 , -5.978172 ) 
	( -6.05 , -5.931402 ) 
};
\addlegendentry{0.95 Quantile}

\addplot[color=blue, dashed] coordinates {
( -8.5 , -8.614687 ) 
( -8.45 , -8.562937 ) 
( -8.4 , -8.511961 ) 
( -8.35 , -8.462962 ) 
( -8.3 , -8.412513 ) 
( -8.25 , -8.363951 ) 
( -8.2 , -8.312478 ) 
( -8.15 , -8.261113 ) 
( -8.1 , -8.211639 ) 
( -8.05 , -8.160051 ) 
( -8 , -8.11092 ) 
( -7.95 , -8.060747 ) 
( -7.9 , -8.013524 ) 
( -7.85 , -7.960215 ) 
( -7.8 , -7.912272 ) 
( -7.75 , -7.861957 ) 
( -7.7 , -7.809147 ) 
( -7.65 , -7.76156 ) 
( -7.6 , -7.710404 ) 
( -7.55 , -7.659715 ) 
( -7.5 , -7.612072 ) 
( -7.45 , -7.55972 ) 
( -7.4 , -7.509516 ) 
( -7.35 , -7.459534 ) 
( -7.3 , -7.409828 ) 
( -7.25 , -7.358885 ) 
( -7.2 , -7.310031 ) 
( -7.15 , -7.261491 ) 
( -7.1 , -7.212326 ) 
( -7.05 , -7.162355 ) 
( -7 , -7.109484 ) 
( -6.95 , -7.059119 ) 
( -6.9 , -7.009241 ) 
( -6.85 , -6.961422 ) 
( -6.8 , -6.912313 ) 
( -6.75 , -6.861813 ) 
( -6.7 , -6.807531 ) 
( -6.65 , -6.762146 ) 
( -6.6 , -6.710558 ) 
( -6.55 , -6.658737 ) 
( -6.5 , -6.60928 ) 
( -6.45 , -6.561784 ) 
( -6.4 , -6.510702 ) 
( -6.35 , -6.459744 ) 
( -6.3 , -6.410723 ) 
( -6.25 , -6.361589 ) 
( -6.2 , -6.308691 ) 
( -6.15 , -6.257903 ) 
( -6.1 , -6.210915 ) 
( -6.05 , -6.160609 ) 	
};
\addlegendentry{0.05 Quantile}
\end{axis}

\begin{axis}[
title=(b),
at={($(lines.east)+(1.5cm,0)$)}, anchor=west, height = 6.4cm, width=6.4cm,
xlabel=Iterations,
ylabel=RMSE,
legend style = {font = \tiny}]

\addplot[color=black] coordinates {
	( 40 , 0.01680123 ) 
	( 52 , 0.01602657 ) 
	( 64 , 0.013932 ) 
	( 76 , 0.01339286 ) 
	( 88 , 0.01208722 ) 
	( 100 , 0.01145992 ) 
};
\addlegendentry{AQR}

\addplot[color=red, dashed] coordinates {
	( 40 , 0.02270249 ) 
	( 52 , 0.02233038 ) 
	( 64 , 0.01921365 ) 
	( 76 , 0.01834187 ) 
	( 88 , 0.01699659 ) 
	( 100 , 0.01685845 ) 	
};
\addlegendentry{BRM}

\addplot[color=blue, dotted] coordinates {
	( 40 , 0.02785849 ) 
	( 52 , 0.01991233 ) 
	( 64 , 0.01772194 ) 
	( 76 , 0.01717974 ) 
	( 88 , 0.01609097 ) 
	( 100 , 0.01590946 ) 
};
\addlegendentry{RM}
\end{axis}
\end{tikzpicture}
    \caption{\textbf{(a)} The approximately linear quantiles of the Watterson estimator. \textbf{(b)} Comparison of adaptive quantile regression (AQR), Robbins-Monro (RM) and Binary Robbins-Monro (BRM) methods for determination of $U$  by their root MSE , as calculated from 300 simulations.}
    \label{fig:coalescence}
\end{figure}
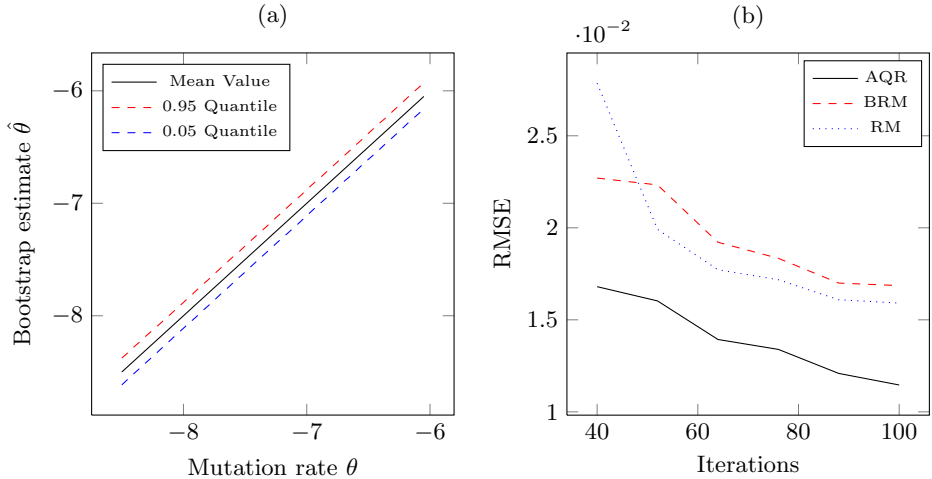

The results, shown in Figure \ref{fig:coalescence}, show a uniform superiority of the adaptive quantile regression methodology over the stochastic approximation based approaches. For 40 iterations, the root mean square error of the AQR method is 40\% lower than that of RM, and 20\% lower than that of BRM, and it remains lower by at least 20\% for 40-100 iterations.

The significantly lower error of the adaptive quantile regression procedure suggests that a real benefit can be achieved by using quantile regression modeling instead of a stochastic approximation when estimating the endpoint of a test-inversion bootstrap CI, making it possible to achieve a desired level of accuracy by using less iterations, thus saving computational resources.

\section{Discussion}

In this paper we proposed a novel method for the construction of confidence intervals based on quantile regression, that is often more efficient than the standard approaches based on stochastic approximation, and can be applied to problems of inference in population genetics models.

The proposed method enjoys better accuracy, and a lack of tunable parameters, in contrast to the stochastic optimization methods that depend on setting a constant to an unknown value in order to achieve full asymptotic efficiency.
We also showed that the asymptotic variance under ideal conditions is surprisingly equal for the stochastic approximation methods and the quantile regression method. Indeed, both perform comparably for a large number of iterations.

For a small number of iterations, it appears that the use that our quantile regression method makes of all previous samples allows it to achieve better efficiency in most cases.
It is also likely that using the points sampled in the search for the upper endpoint in the search for the lower endpoint(or vice-versa), possibly with weights, will improve the performance even further, an enhancement that is not possible for the stochastic approximation based methods.

The sensitivity of the quantile regression methodology to outliers may be improved by considering robust alternatives to quantile regression or by heuristic removal of outliers, but these approaches were not tested in this paper.

The method can be extended to distributions with non-linear quantiles by replacing the linear quantile regression with non-linear or non-parametric regression. Multiple methods exist in this domain and their performance remains to be tested.

To conclude, we recommend the usage of the quantile regression approach when the distribution of the estimator does not seem prone to outliers, and the conditional quantile is locally linear, both of these conditions can be checked after sampling a few points.

For distributions that do not meet these requirements we recommend using a stochastic approximation, with adaptive determination of the step size, and a quantile regression prediction as a starting point. The Binary Robbins-Monro seems to be preferable to the classic one mostly if good prior knowledge regarding the endpoint is available.

\bibliography{FastConfidenceIntervals}

\end{document}